\newcommand{\nd}{{\noindent}}
\newcommand{\cut}{\hbox{\hskip 1pt\vrule width3pt height2pt depth1pt \hskip1pt}}
\newcommand{\www}{{\mathcal W}}
\newtheorem{thm}{Theorem}[section]
\newtheorem*{thm*}{Theorem}
\newtheorem{cor}[thm]{Corollary}
\theoremstyle{definition}
\begin{document} 

\title{Phase transitions in a complex network}
\author{Charles Radin and Lorenzo Sadun}

\address{Charles Radin\\Department of Mathematics\\The University of
  Texas at Austin\\ Austin, TX 78712} \email{radin@math.utexas.edu}
\address{Lorenzo Sadun\\Department of Mathematics\\The University of
  Texas at Austin\\ Austin, TX 78712} \email{sadun@math.utexas.edu}
\thanks{This work was partially supported by NSF
  grants DMS-1208941 and DMS-1101326} 
\subjclass[2010]{82B26, 05C35, 05C80}
\keywords{graphon, phase transition, mean field, random graph,
  inequivalent ensembles}

\begin{abstract}
We study a mean field model of a complex network, focusing on
edge and triangle densities. Our first result is the
derivation of a variational characterization of the entropy density,
compatible with the infinite node limit. We then determine the
optimizing graphs for small triangle density and a range of edge
density, though we can only prove they are local, not global, maxima
of the entropy density. With this assumption we then prove that the
resulting entropy density must lose its analyticity in various
regimes.  In particular this implies the existence of a phase
transition between distinct heterogeneous multipartite phases at low
triangle density, and a phase transition between these phases and the
disordered phase at high triangle density.
\end{abstract}

\maketitle

\section{Introduction}

Exponential random graph models are a well known class of complex
networks; see \cite{Ne} and references therein. Using the language of
statistical mechanics they are mean field models, in the grand
canonical ensemble, with a variety of possible many-body interactions
appropriate to the model's use. Phase transitions, which require an
infinite node limit, have been proven for them \cite{CD, RY} using the
recently developed `graphon' formalism \cite{Lov} in place of the
infinite volume limit formalism \cite{R1, R2} used in statistical
mechanics. Exponential random graph models are mean field models and
therefore the analogues of the various statistical mechanics ensembles
(microcanonical, grand canonical, pressure, $\ldots$) which are
equivalent in the infinite volume limit for particle systems with
short range interactions \cite{R1}, need not be equivalent in these
mean field models; see for instance \cite{TET}. (Equivalence of
ensembles is discussed further in the Conclusion.) In this work we use
the microcanonical ensemble of one of the best known exponential
random graph models, one originally formulated by Strauss \cite{St},
and give evidence of phase transitions which are not as accessible in
the grand canonical ensemble. The transitions previously analyzed for
a wide class of exponential random graphs are similar to liquid/gas
transitions in that they are transitions between graphs of similar
character, of the same (fluid-like) phase \cite{RY}, while the
transitions we focus on in the microcanonical ensemble are analogous
to solid/solid transitions, transitions between graphs of different
phases. (See \cite{AR} for a more primitive grand canonical analysis
of these phases.)

We need some network notation. Consider the set $\hat G^n$ of simple
graphs $G$ with set $V(G)$ of (labelled) vertices, edge set $E(G)$ and
triangle set $T(G)$, where the cardinality $|V(G)|=n$. (`Simple' means
the edges are undirected and there are no multiple edges or loops.)
Think of an unordered pair of vertices as a point in an abstract
space, an edge as a particle that may occupy that point, and a
triangle as a many-body interaction energy associated with its edges,
so the microcanonical partition function, $\displaystyle
Z^{n,\delta}_{e,t}$, is the number of simple graphs such that:
\begin{equation} e(G)\equiv \frac{|E(G)|}{{n \choose 2}}
\in (e-\delta,e+\delta) \quad \hbox{ and } \quad
t(G)\equiv \frac{|T(G)|}{{n\choose 3}} \in (t-\delta,t+\delta).
\end{equation} 
Graphs in $\displaystyle \cup_{n\ge 1}\hat G^n$ are known to have edge
and triangle densities, $(e,t)$, dense in the region $R$ in the
$e,t$-plane bounded by three curves, $c_1: (e,e^{3/2}), \ \ 0\le e\le
1$, the line $l_1: \ (e,0), \ \ 0\le e\le 1/2$ and a certain scalloped
curve $(e,h(e)),\ \ 1/2\le e\le 1$, lying above the curve
$(e,e(2e-1), \ \ 1/2\le e\le 1$, and meeting it when
$e=e_k=k/(k+1),\ \ k\ge 1$; see \cite{PR} and references therein, and Figure 1.
\vskip.3truein
\begin{figure}[h]
\includegraphics[width=3in]{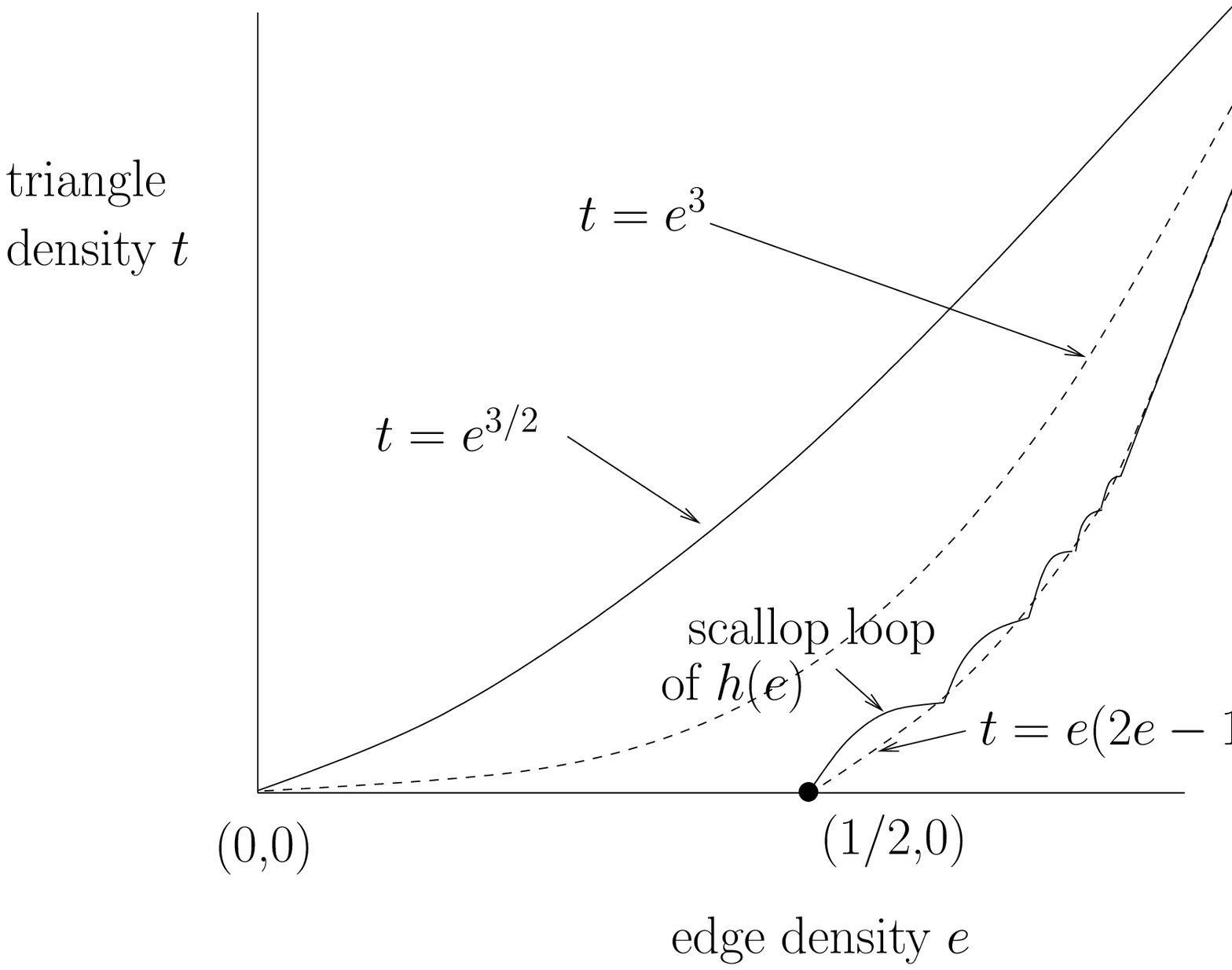}
\caption{The microcanonical phase space $R$, outlined 
in solid lines}
\label{phasefig}
\end{figure}

We are interested in the relative density of graphs in $R$, more
precisely in the entropy, the exponential rate of growth of the number
of graphs as $n$ grows, as follows. First consider
\begin{equation}
  s^{n,\delta}_{e,t}=\frac{\ln(Z^{n,\delta}_{e,t})}{n^2}, \hbox{ and
    then } 
s_{e,t}=\lim_{\delta\to 0^+}\lim_{n\to \infty}s^{n,\delta}_{e,t}.
\end{equation} 
\nd (The existence of the double limit will be proven.)
We will measure the growth rate by the entropy density
$s_{e,t},$ and the main question of interest for us is the existence
of phase transitions ({\it i.e.} lack of analyticity of $s_{e,t}$) near the
lower boundary of $R$ in Figure 1. The lower boundary consists of the
scalloped curve together with the `first scallop', the line from
$(0,0)$ to $(1/2,0)$.

We now need to review some notation and results concerning graphons,
as recently developed in \cite{LS1,
  LS2, BCLSV, BCL, LS3}. See also the recent book \cite{Lov}.

\section{Graphons}

Consider the set $\www$ of all symmetric, measurable functions 
\begin{equation} g:(x,y)\in [0,1]^2\to g(x,y)\in [0,1].\end{equation} 
Think of each axis as a continuous set of vertices of a graph. For a
graph $G\in \hat G^n$ one associates 
\begin{equation} \label{checkerboard} g^G(x,y) = \begin{cases} 1 &\hbox{if }(\lceil nx \rceil , \lceil ny \rceil
)\hbox{ is an edge of }G\cr 0 & \hbox{otherwise,} \end{cases}
\end{equation} 
where $\lceil y \rceil$ denotes the smallest integer greater than or
equal to $y$. 
For $g\in \www$ and simple graph $H$ we define
\begin{equation} t(H,g)\equiv \int_{[0,1]^\ell} \prod_{(i,j)\in E(H)}g(x_i,x_j)\,dx_1\cdots
   dx_\ell, \end{equation} 
where $\ell = |V(H)|$, and note that for a graph $G$, $t(H,g^G)$ is the 
density of graph homomorphisms $H\to G$:
\begin{equation} \frac{|\hbox {hom}(H,G)|}{|V(G)|^{|V(H)|}}. \end{equation} 
We define an equivalence relation on $\www$ as follows: $f\sim g$
if and only if $t(H,f)=t(H,g)$ for every simple graph $H$.  Elements
of $\www$ are called ``graphons'', elements of the quotient space $\tilde \www$ are called ``reduced graphons'', and
the class containing $g\in \www$ is denoted $\tilde g$. 
Equivalent functions in $\www$ differ by a change of variables in the
following sense. Let $\Sigma$ be the space of measure preserving maps
$\sigma: [0,1]\to [0,1]$, and for $f$ in $\www$ and $\sigma\in
\Sigma$, let $f_\sigma(x,y)\equiv f(\sigma(x),\sigma(y))$. Then $f\sim
g$ if and only if there exist $\sigma, \sigma'$ in $\Sigma$ such that
$f_\sigma =g_{\sigma'}$ almost everywhere; see Cor. 2.2 in \cite{BCL}. The space
$\www$  is compact with respect to the `cut metric'
defined as follows.
First, on ${\www}$ define:
$${d}_{\cut}(f,g)\equiv \sup_{S,T\subseteq [0,1]}\Big| \int_{S\times
  T}[f(x,y)-g(x,y)]\, dxdy\Big|. $$
Then on $\tilde \www$ define the cut metric by:

$${\tilde d}_{\cut}(\tilde f,\tilde g)\equiv \inf_{\sigma,\sigma'\in \Sigma}
{d}_{\cut}(f_{\sigma},g_{\sigma'}). $$

We will use the fact, which follows easily from Lemma 4.1 in 
\cite{LS1}, that the cut metric is equivalent to the metric
\begin{equation} \delta_{_{\hbox{hom}}}(\tilde f,\tilde g)\equiv \sum_{j\ge 1}
\frac{1}{2^j}|t(H_j,f)-t(H_j,g)|, 
\end{equation} 
where $\{H_j\}$ is a countable set of simple graphs, one from each
graph-equivalence class. 
Also note that
if each vertex of a finite graph is split into the same number of
`twins', each connected to the same vertices, the result stays in the
same equivalence class, so for a convergent sequence $\tilde g^{G_j}$ one may
assume $|V(G_j)|\to \infty$. 

The value of this graphon formalism here is that one can use large deviations
on graphs with independent edges, proven in \cite{CV}, to give an optimization
formula for $s_{e,t}$, which allows us to analyze $s_{e,t}$ near the
graphons of minimal triangle density, the lower boundary of $R$ in
Figure 1. We next use the large deviations rate function for graphs
with independent edges to give a variational
characterization for the entropy density. (There is a variational
characterization in \cite{CD} of the free energy density in the grand
canonical ensemble.)

\section{A variational characterization of the entropy density}

\begin{thm}\label{thm1} 
  For any possible pair $(e,t)$, $s_{e,t} = - \min I(g)$, where the
  minimum is over all graphons $g$ with $e(g)=e$ and $t(g)=t$,
  where 
\begin{equation} e(g)=\int_0^1 \!\!\! \int_0^1 g(x,y) \, dx \, dy, \qquad t(g) =
  \int_0^1\!\!\!\int_0^1\!\!\!\int_0^1 g(x,y) g(y,z) g(z,x) \, dx \, dy \, dz,$$
  and the rate function is $$I(g) = \int_0^1\!\!\!\int_0^1 I_0(g(x,y)) \,dx\, dy, \hbox{
    \rm where } I_0(u)= \frac{1}{2} \left [u \ln(u) +
    (1-u)\ln(1-u)\right ].
\end{equation} 
\end{thm}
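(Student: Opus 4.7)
The plan is to deduce the formula from the Chatterjee-Varadhan large deviations principle (LDP) for the random reduced graphon of $G(n,1/2)$, proved in \cite{CV}. That LDP, at speed $n^2$ in the cut metric $\tilde d_{\cut}$, has rate function $I_{1/2}(g) = (\ln 2)/2 + I(g)$ with $I$ as in the theorem (a direct algebraic check). Since $G(n,1/2)$ is the uniform measure on $\hat G^n$ with weight $2^{-\binom{n}{2}} = e^{-n^2(\ln 2)/2 + o(n^2)}$, multiplying the LDP bounds by this normalizer turns them into a counting statement: for any Borel set $A\subset\tilde\www$ whose $I$-infima over $A$ and over $\overline{A}$ coincide,
\[
\lim_{n\to\infty}\frac{1}{n^2}\ln\bigl|\{G\in\hat G^n : \tilde g^G\in A\}\bigr| \;=\; -\inf_{g\in A} I(g),
\]
the $(\ln 2)/2$ term cancelling exactly against the normalizer.

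Apply this with $A_\delta = \{g\in\tilde\www : |e(g)-e|<\delta,\ |t(g)-t|<\delta\}$, which is open in $\tilde d_{\cut}$ because $e(\cdot)$ and $t(\cdot)$ are Lipschitz in the cut metric. The microcanonical count $Z^{n,\delta}_{e,t}$ agrees with $|\{G : \tilde g^G\in A_\delta\}|$ up to the $O(1/n)$ discrepancies $e(G)-e(g^G) = e(G)/n$ and $t(G)-t(K_3,g^G) = O(1/n)$ (arising from the $n\to n-1$ and $n\to n(n-1)(n-2)$ corrections in the denominators), which are absorbed by an innocuous $O(1/n)$ tweak of $\delta$ and invisible at the $n^2$-exponential scale. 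Hence for each fixed $\delta>0$,
\[
\lim_{n\to\infty} s^{n,\delta}_{e,t} \;=\; -\inf_{g\in A_\delta} I(g).
\]

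For the $\delta\to 0$ limit, set $A_0 = \{g\in\tilde\www : e(g)=e,\ t(g)=t\}$; this set is non-empty for $(e,t)\in R$ and closed, hence compact, in $(\tilde\www,\tilde d_{\cut})$. Because $I_0$ is convex and continuous on $[0,1]$, the functional $I$ is lower semi-continuous in the cut metric, so the infimum on $A_0$ is attained, giving $\min$ rather than $\inf$. Taking $\delta$-minimizers $g_\delta\in A_\delta$, extracting a cut-metric-convergent subsequence $g_{\delta_k}\to g_\infty$, and using continuity of $e,t$ yields $g_\infty\in A_0$; lower semi-continuity then gives $I(g_\infty)\le \liminf I(g_{\delta_k})$. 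Combined with the trivial bound $\inf_{A_\delta} I \le \min_{A_0} I$, this forces $\lim_{\delta\to 0}\inf_{A_\delta} I = \min_{A_0} I$. The main obstacle, in my view, is verifying the LDP-continuity hypothesis invoked in the second step: one must show that the open box's $I$-infimum agrees with that of its closure, which reduces to a small perturbation argument — any graphon on the boundary $e(g)=e\pm\delta$ or $t(g)=t\pm\delta$ can be nudged strictly inside the box with arbitrarily small change in $I$, using continuity of $I_0$ on $[0,1]$ together with the continuity of $e,t$ along such perturbations.
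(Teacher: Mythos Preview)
Your proposal is correct and follows essentially the same route as the paper: invoke the Chatterjee--Varadhan LDP, convert probabilities to counts, and then pass to the $\delta\to 0$ limit using compactness of $\tilde\www$ and lower semicontinuity of $I$. The compactness/l.s.c.\ argument you give for $\lim_{\delta\to 0}\inf_{A_\delta}I=\min_{A_0}I$ is exactly the paper's.

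The one substantive difference is in how the open/closed gap is handled. You aim to show, for each fixed $\delta$, that $A_\delta$ is an $I$-continuity set (so that $\lim_n s^{n,\delta}_{e,t}$ exists), and you flag the boundary-perturbation argument as the ``main obstacle''. The paper bypasses this entirely with a sandwich: writing $U_\delta$ and $F_\delta$ for the preimages of the open and closed boxes, it uses the LDP lower bound on $U_\delta$ and the upper bound on $F_\delta$, together with the trivial inclusion $F_\delta\subset U_{\delta+\delta^2}$, to get
\[
-\inf_{U_\delta} I \;\le\; \liminf_n \frac{\ln|U_\delta^n|}{n^2} \;\le\; \limsup_n \frac{\ln|F_\delta^n|}{n^2} \;\le\; -\inf_{F_\delta} I \;\le\; -\inf_{U_{\delta+\delta^2}} I.
\]
Since the outer terms have the same $\delta\to 0$ limit, the double limit is pinned down without ever asserting that $\lim_n$ exists for fixed $\delta$ and without any perturbation argument. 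Your perturbation sketch is plausible but, as you note, requires care at corners of the box and near the boundary of $R$; the sandwich makes all of that moot. On the plus side, you explicitly address the $O(1/n)$ discrepancy between the subgraph densities $e(G),t(G)$ and the homomorphism densities $e(g^G),t(g^G)$, a point the paper leaves implicit.
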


\begin{proof} We first prove that $s_{e,t}$ is well-defined.
  A priori all we know is that $\liminf\ln
  (Z_{e,t}^{n,\delta})/n^2$ and $\limsup\ln (Z_{e,t}^{n,\delta})/n^2$
  exist as ${n\to\infty}$. However, we will show that they both
  approach $-\min I(g)$ as $\delta \to 0^+$.

  We need to define a few sets. Let $U_\delta$ be the set of graphons
  $g$ with $e(g)$ and $t(g)$ strictly within $\delta$ of $e$ and $t$, i.e. the
  preimage of an open square of side $2\delta$ in $(e,t)$-space, and
  let $F_\delta$ be the preimage of the closed square.  Let $\tilde U_\delta$ and $\tilde F_\delta$ be the
  corresponding sets in $\tilde \www$.  Let
  $|U_\delta^n|$ and $|F_\delta^n|$ denote the number of graphs with
  $n$ vertices whose checkerboard graphons (\ref{checkerboard}) lie in $U_\delta$ or
  $F_\delta$. The large deviation principle, Theorem 2.3 of
  \cite{CV}, implies that:
\begin{equation} \limsup_{n \to \infty} \frac{\ln|F_\delta^n|}{n^2} \le -\inf_{\tilde g\in \tilde F_\delta} I(\tilde g),\end{equation} 
which also equals $-\inf_{g \in F_\delta} I(g)$, and that
\begin{equation} \liminf_{n \to \infty} \frac{\ln|U_\delta^n|}{n^2} \ge -\inf_{\tilde g \in \tilde U_\delta} I(\tilde g),\end{equation} 
which also equals $-\inf_{g \in U_\delta} I(g)$.  This yields a chain
of inequalities
\begin{equation} -\inf_{U_\delta} I(g) \le \liminf \frac{\ln|U^n_\delta|}{n^2} \le \limsup \frac{\ln|U^n_\delta|}{n^2} \le 
\limsup \frac{\ln|F_\delta^n|}{n^2} \le - \inf_{F_\delta} I(g) \le -
\inf_{U_{\delta+\delta^2}} I(g)\end{equation} As $\delta \to 0^+$, the limits of
$-\inf_{U_\delta} I(g)$ and $-\inf_{U_{\delta+\delta^2}} I(g)$ are the
same, and everything in between is trapped. 

So far we have proven that 
\begin{equation} s_{e,t} = -\lim_{\delta \to 0^+}
\inf_{U_\delta} I(g).\end{equation} Next we must show that the right hand side is
equal to $- \min_{F_0} I(g)$.  By definition, we can find a sequence
of reduced graphons $\tilde g_\delta \in \tilde U_\delta$ such that
$\lim_{\delta \to 0} I(\tilde g_\delta) = \lim \inf_{U_\delta}
I(g)$. Since $\tilde W$ is compact, these reduced graphons converge to
a reduced graphon $\tilde g_0$, represented by a graphon $g_0 \in
F_0$. Since $I$ is lower-semicontinuous \cite{CV}, $I(g_0) \le \lim
I(g_\delta)$, so $\min_{F_0} I(g) \le \lim \inf_{U_\delta} I(g)$. (We
write $\min$ rather than $\inf$ since $\tilde F_0$ is compact.)
However, $\min_{F_0} I(g)$ is at least as big as $\inf_{U_\delta}
I(g)$, since $F_0 \subset U_\delta$.  Thus $\min_{F_0} I(g) =
\lim_{\delta \to 0} \inf_{U_\delta} I(g)$.
\end{proof} 

\section{Minimizing the rate function on the boundary}

>From now on we will work exclusively with graphons rather than with
graphs. From Theorem \ref{thm1}, all questions boil down
to ``minimize the rate function over such-and-such region''. The first
region we study is the lower boundary of $(e,t)$-space, beginning with
the first (flat) scallop:

\begin{thm}\label{flat} If $e \le 1/2$ and $t=0$, then $\min_{F_0}
  I(g) = I_0(2e)/2$, and this minimum is achieved at the graphon
\begin{equation} \label{2fold} g_0(x,y) = \begin{cases}2e &\hbox{if }x<\frac{1}{2} < y \hbox{  or } y<\frac{1}{2}<x; \cr 0 & \hbox{otherwise.}\end{cases}\end{equation} 
Furthermore, any other minimizer is equivalent to $g_0$, corresponding
to the same reduced graphon. \end{thm}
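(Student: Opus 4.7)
The plan is to combine a graphon analogue of Mantel's theorem with Jensen's inequality. Since $t(g)=0$ and $g \ge 0$, the product $g(x,y)g(y,z)g(z,x)$ vanishes a.e., so the support $S := \{(x,y) : g(x,y)>0\}$ is essentially triangle-free: for a.e.\ $(x,y,z)$, at least one of $(x,y),(y,z),(z,x)$ lies outside $S$.

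First I would bound $A := \mu(S)$. Set $d(x) := \mu(\{y : (x,y)\in S\})$, so $\int_0^1 d\,dx = A$. For a.e.\ $(x,y)\in S$ the slices $\{z : (x,z)\in S\}$ and $\{z : (y,z)\in S\}$ are essentially disjoint (a common $z$ would complete a triangle), giving $d(x)+d(y)\le 1$ on $S$. Integrating over $S$ yields $2\int_0^1 d(x)^2\,dx \le A$, while Cauchy--Schwarz gives $\int_0^1 d(x)^2\,dx \ge A^2$. Combined, $A\le 1/2$; since $g\le 1$ also forces $A\ge e$, we have $e\le A\le 1/2$.

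Next, because $I_0(0)=0$, Jensen applied to the convex $I_0$ on $S$ with the uniform measure gives
\[ I(g) \;=\; \int_S I_0(g)\,dx\,dy \;\ge\; A\,I_0(e/A) \;=:\; \phi(A). \]
A direct computation using $I_0'(u)=\tfrac12\ln(u/(1-u))$ yields $\phi'(A) = \tfrac12\ln(1-e/A)$, which is strictly negative on $(e,1/2]$; hence $\phi$ is strictly decreasing and $I(g) \ge \phi(A) \ge \phi(1/2) = \tfrac12 I_0(2e)$. Direct evaluation confirms $e(g_0)=e$, $t(g_0)=0$, and $I(g_0)=\tfrac12 I_0(2e)$, so $g_0$ saturates the bound.

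For uniqueness, equality throughout requires $g\equiv 2e$ on $S$ (Jensen), $A=1/2$ ($\phi$ strictly decreasing), $d\equiv 1/2$ a.e.\ (Cauchy--Schwarz), and $d(x)+d(y)=1$ a.e.\ on $S$. The last two force the neighborhoods $N(x):=\{y : (x,y)\in S\}$ and $N(y)$ of any $S$-edge to partition $[0,1]$ modulo null sets. Picking a typical $x_0$ and setting $R := N(x_0)$, $L := R^c$, one argues that a.e.\ $y\in R$ satisfies $N(y)=L$ and a.e.\ $y\in L$ satisfies $N(y)=R$, so $S$ agrees a.e.\ with $L\times R \cup R\times L$ where $|L|=|R|=1/2$. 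Any measure-preserving bijection of $[0,1]$ sending $L$ to $[0,1/2]$ and $R$ to $[1/2,1]$ then exhibits $g\sim g_0$. I expect this last step to be the main obstacle: the extremality of the Mantel and Jensen inequalities does most of the work, but one must handle null sets carefully to extract a genuine measurable bipartition and realize the equivalence cleanly.
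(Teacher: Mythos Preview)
Your argument is correct and follows essentially the same route as the paper: bound the measure of the support by $1/2$ via a graphon Mantel argument, then use convexity of $I_0$ (Jensen) to see the optimizer is constant $2e$ on a support of measure exactly $1/2$, and finally invoke the extremal case of Mantel for uniqueness. The only difference is granularity: the paper simply asserts that a triangle-free graphon cannot have edge density exceeding $1/2$ and that the extremal case is complete balanced bipartite, whereas you supply the classical $d(x)+d(y)\le 1$ proof and trace the equality cases explicitly.
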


\begin{proof} Since $t(g)$ is identically zero, the measure of the set
  $\{(x,y) \in [0,1]^2 | g(x,y) =0\}$ is at least 1/2.  Otherwise,
  the graphon $\bar g(x,y) = \begin{cases}1 & \hbox{if } g(x,y)>0; \cr
    0& \hbox{otherwise,}\end{cases}$ would have no triangles and would
  have edge density greater than 1/2, which is impossible.  So we
  restrict attention to graphons that are zero on a set of measure at
  least 1/2 and have edge density $e$. From the convexity of $I_0$, 
  we know that the graphon minimizing
  $I$ must be zero on a set of measure 1/2 and
  must be constant on the rest. Thus $g_0$ is a minimizer, and $\min_{F_0}I(g) = I(g_0) = 
I_0(2e)/2$.

  Now suppose that $g$ is another minimizer. Since $g$ is zero on a
  set of measure 1/2 and is $2e$ on a set of measure 1/2, $\bar g$ is
  1 on a set of measure 1/2, and so describes a graphon with edge
  density 1/2 and no triangles. This means that $\bar g$ describes a
  complete bipartite graph with the two parts having the same
  measure. That is, $\bar g$ is equivalent to the graphon that equals
  1 if $x<\frac{1}{2}<y$ or $y< \frac{1}{2} < x$ and is zero
  everywhere else. But then $g=2e\bar g$ is equivalent to $g_0$.
\end{proof}

The situation on the curved scallops is slightly more complicated. Pick an
integer $\ell>1$. (The case $\ell=1$ just gives us our first scallop.) If $e
\in \left[1-\frac{1}{\ell}, 1-\frac{1}{\ell+1}\right ]$, then any graph $G$
with edge density $e$ and the minimum number of triangles has to take
the following form (see \cite{PR} for the history).  Let
\begin{equation}  c = \frac{\ell + \sqrt{\ell(\ell-e(\ell+1))}}{\ell(\ell+1)}.\end{equation} 
There is a partition of $\{1,\ldots,n\}$ into $\ell$ pieces, the first $\ell-1$
of size $\lfloor cn\rfloor$ and the last of size between $\lfloor cn\rfloor$ 
and $2\lfloor cn\rfloor$, such
that $G$ is the complete $\ell$-partite graph on these pieces, plus a
number of additional edges within the last piece. ($\lfloor y\rfloor$
denotes the largest integer greater than or equal to $y$.)
These additional
edges can take any form, as long as there are no triangles within the
last piece.

This means that, after possibly renumbering the vertices, the graphon
for such a graph can be written as an uneven $\ell \times \ell$ checkerboard
obtained from cutting the unit interval into pieces $V_k=[(k-1)c,kc]$ for $k<\ell$
and $V_\ell=[(\ell-1)c, 1]$, with the checkerboard being 1 outside the main
diagonal, 0 on the main diagonal except the upper right corner, and
corresponding to a zero-triangle graph in the upper right corner.

Limits of such graphons in the metric must take the form
\begin{equation} g(x,y) = \begin{cases} 1 & x< kc < y \hbox{ or } y<kc < x \hbox{ for an integer }k<\ell; \cr 
0 & (k-1) c < x,y < kc \hbox{ for some integer }k<\ell; \cr
\hbox{unspecified} & x,y > (\ell-1)c, \end{cases}\end{equation} 
with
\begin{equation} \iiint \limits_{[(\ell-1)c,1]^3} g(x,y) g(y,z) g(z,x) \, dx\, dy \, dz = 0,\end{equation} 
and with $\iint\limits_{[0,1]^2} g(x,y) \, dx \, dy = e$.  Minimizing
$I(g)$ on such graphons is easy, since all but the upper right corner
of the graphon is fixed.  Applying Theorem \ref{flat} to that corner,
we get

\begin{thm}\label{scallop} If $e>1/2$ and $t$ is the smallest value possible, then the minimum of $I(g)$ on $F_0$ is achieved by
  the graphon
\begin{equation} g_0(x,y) = \begin{cases} 1 & x< kc < y \hbox{ or } y<kc < x \hbox{ for an integer }k<\ell; \cr 
  p & (\ell-1)c < x < [1+(\ell-1)c]/2 < y \hbox{ or } (\ell-1)c < y < [1+
  (\ell-1)c]/2 < x; \cr 0 & \hbox{otherwise,} \end{cases}\end{equation} 
where 
\begin{equation} p= \frac{4c(1-\ell c)}{(1-(\ell-1)c)^2}\end{equation} 
is a number chosen to make $\int\!\!\int_{[0,1]^2} g(x,y) \; dx\;dy =
e$. Furthermore, any other minimizer is equivalent to $g_0$. The minimum value
of $I(g)$ is 
\begin{equation} I(g_0) = \frac{(1-(\ell-1)c)^2}{2}I_0(p). \end{equation}
\end{thm}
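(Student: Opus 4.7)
The plan is to reduce the minimization to Theorem \ref{flat} by exploiting the structural characterization stated in the paragraphs immediately preceding the theorem: any $g \in F_0$ is completely rigid outside the last diagonal block $V_\ell \times V_\ell = [(\ell-1)c,1]^2$, so the minimization collapses onto a triangle-free graphon with a prescribed edge mass inside that block.

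Write $s := 1-(\ell-1)c$ for the length of $V_\ell$. Because $I_0(0) = I_0(1) = 0$, the frozen off-diagonal blocks (where $g=1$) and zero diagonal blocks (where $g=0$) contribute nothing to $I(g)$, so $I(g) = \int_{V_\ell}\int_{V_\ell} I_0(g(x,y)) \, dx \, dy$. The constraint $e(g) = e$ then pins the edge mass in the last block to $E_\ell := e - (\ell-1)(\ell-2) c^2 - 2(\ell-1) c s$.

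Next I would rescale the last block to $[0,1]^2$ via $u := (x-(\ell-1)c)/s$ and $\tilde h(u,v) := g((\ell-1)c + s u, (\ell-1)c + s v)$. Then $\tilde h$ is a graphon on $[0,1]^2$ with triangle density zero and edge density $\tilde e := E_\ell/s^2$, which satisfies $\tilde e \le 1/2$ by the same support-comparison argument used in the proof of Theorem \ref{flat}. Since $I(g) = s^2 \int_{[0,1]^2} I_0(\tilde h) \, du \, dv$, Theorem \ref{flat} identifies the unique minimizer (up to equivalence) as the bipartite graphon equal to $2\tilde e$ on the off-diagonal halves of $[0,1]^2$, with minimum rate $I_0(2\tilde e)/2$. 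Scaling back produces the $g_0$ in the statement with $p = 2\tilde e$ and $I(g_0) = s^2 I_0(p)/2$.

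Finally, I would verify that $p = 2E_\ell/s^2$ agrees with the displayed $4c(1-\ell c)/s^2$; this reduces to the identity $E_\ell = 2c(1-\ell c)$, equivalently $e = 2\ell c - \ell(\ell+1) c^2$, which one obtains by squaring the closed form $c = (\ell + \sqrt{\ell(\ell-e(\ell+1))})/(\ell(\ell+1))$. Uniqueness of the full minimizer up to equivalence transfers directly from Theorem \ref{flat}: a measure-preserving rearrangement of $V_\ell$ witnessing equivalence of the rescaled $\tilde h$ to the bipartite graphon extends by the identity outside $V_\ell$ to a measure-preserving rearrangement of $[0,1]$ witnessing equivalence of the full graphon to $g_0$. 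The main obstacle is the opening reduction itself: namely, confirming that the classification from \cite{PR} of minimum-triangle graphs passes to the graphon limit in precisely the form previewed before the theorem. Once that structural result is in hand, the rest is rescaling, a direct appeal to Theorem \ref{flat}, and a short algebraic check.
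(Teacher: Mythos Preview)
Your proposal is correct and follows exactly the approach the paper takes: the paper's entire argument is the single sentence ``Minimizing $I(g)$ on such graphons is easy, since all but the upper right corner of the graphon is fixed. Applying Theorem~\ref{flat} to that corner, we get~\ldots'', and you have simply made the rescaling, edge-count bookkeeping, and algebraic verification of $p$ explicit. Your caveat about the structural input from \cite{PR} passing to the graphon limit is also exactly what the paper assumes in the paragraphs preceding the theorem.
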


\section{Minimizing near the first scallop}

Now that we know the minimizer {\em at} the (lower) boundary, we perturb it to
get a minimizer {\em near} the boundary.

\begin{thm}\label{near} Pick $e<1/2$ and $\epsilon$ sufficiently
  small. Then the graphon
\begin{equation} g(x,y) = \begin{cases} 2e-\epsilon & x<\frac{1}{2} < y \hbox{ or } y<\frac{1}{2} < x \cr \epsilon &\hbox{otherwise,} \end{cases}
\end{equation} 
minimizes the rate function to second order in perturbation theory among 
graphons with $e(g)=e$
and $t(g)=e^3-(e-\epsilon)^3$. For pointwise small variations $\delta g$
of $g$, the second variation in $I(g)$ is bounded from below by 
$\frac{1}{2}\iint \limits_{[0,1]^2}(\delta g(x,y))^2 \, dx \, dy$. 
\end{thm}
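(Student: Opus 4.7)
The plan is standard constrained optimization: verify the constraints at $g$, establish the first-order Lagrange condition, and carry out a second-order Taylor expansion with the constraints enforced exactly.

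First, I would verify $e(g) = e$ (immediate from $\tfrac{1}{2}(2e-\epsilon) + \tfrac{1}{2}\epsilon = e$) and $t(g) = e^3 - (e-\epsilon)^3$. For the triangle density, write $g = \epsilon + (2e-2\epsilon)h$, where $h$ is the indicator of $\{x < \tfrac{1}{2} < y\} \cup \{y < \tfrac{1}{2} < x\}$, expand the triple product, and use $\iiint h(x,y)\,dx\,dy\,dz = \tfrac{1}{2}$, $\iiint h(x,y)h(y,z)\,dx\,dy\,dz = \tfrac{1}{4}$, and $\iiint h(x,y)h(y,z)h(z,x)\,dx\,dy\,dz = 0$ (since $h$ corresponds to a bipartite, hence triangle-free, graphon). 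A short calculation then gives $t(g) = 3e^2\epsilon - 3e\epsilon^2 + \epsilon^3 = e^3 - (e-\epsilon)^3$.

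Next, I would establish the first-order Lagrange condition: there exist multipliers $\alpha, \beta$ with $I_0'(g(x,y)) = \alpha + 3\beta K(x,y)$ a.e., where $K(x,y) := \int_0^1 g(y,z)g(z,x)\,dz$. A case analysis with $A = [0,\tfrac{1}{2}]$, $B = [\tfrac{1}{2},1]$, $R_1 = (A\times A)\cup(B\times B)$, $R_2 = (A\times B)\cup(B\times A)$ shows that $K = K_1 := [(2e-\epsilon)^2+\epsilon^2]/2$ on $R_1$ (where $g=\epsilon$) and $K = K_2 := \epsilon(2e-\epsilon)$ on $R_2$ (where $g=2e-\epsilon$). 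The condition reduces to a $2\times 2$ linear system, uniquely solvable; since $K_2 - K_1 = -2(e-\epsilon)^2$, one obtains $\beta = [I_0'(2e-\epsilon) - I_0'(\epsilon)]/[-6(e-\epsilon)^2]$, and $\alpha$ is then determined.

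For the second-order analysis, Taylor-expand $I(g+\delta g) - I(g) = \iint I_0'(g)\delta g + \tfrac{1}{2}\iint I_0''(g)(\delta g)^2 + O(\|\delta g\|^3)$, and enforce the constraints: $e(g+\delta g) = e$ gives $\int \delta g = 0$ exactly, while $t(g+\delta g) = t(g)$, expanded to second order, gives $3\iint K\delta g = -3\iiint \delta g(x,y)\delta g(y,z)g(z,x) + O(\|\delta g\|^3)$. Substituting $I_0'(g) = \alpha + 3\beta K$ into the linear term eliminates it in favor of a quadratic expression, yielding
\begin{equation*}
I(g+\delta g) - I(g) = \tfrac{1}{2}\iint I_0''(g)(\delta g)^2 - 3\beta \iiint \delta g(x,y)\delta g(y,z) g(z,x) + O(\|\delta g\|^3).
\end{equation*}
Since $I_0''(u) = 1/[2u(1-u)] \geq 2$ on $[0,1]$, the first quadratic form alone is $\geq \iint (\delta g)^2$, and the triangle correction is controlled by Cauchy--Schwarz on the inner $y$-integral together with $\|g\|_\infty \leq 2e-\epsilon$. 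The hard part is making this estimate sharp enough to leave $\tfrac{1}{2}\iint(\delta g)^2$ as a lower bound: since $|\beta| \sim |\ln\epsilon|/(12e^2)$ as $\epsilon \to 0$, a crude bound is insufficient, and the argument must exploit the much larger pointwise value $I_0''(\epsilon) \sim 1/(2\epsilon)$ on $R_1$ together with the linearized admissibility $\int_{R_1}\delta g, \int_{R_2}\delta g = O(\|\delta g\|^2)$ (forced by the two constraints since $K_1 \neq K_2$) to absorb the part of the triangle correction concentrated where $g$ is small.
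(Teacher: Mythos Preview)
Your setup through the second-order expansion is correct and matches the paper: you arrive at the same quadratic form $\tfrac{1}{2}\iint I_0''(g)(\delta g)^2 + 3\lambda_2\iiint g(x,y)\,\delta g(x,z)\,\delta g(y,z)$ (up to the sign convention on the multiplier), and you correctly identify that a crude Cauchy--Schwarz bound on the triple integral fails because the multiplier blows up like $|\ln\epsilon|$, while the rescue must come from $I_0''(\epsilon)\sim 1/(2\epsilon)$ on $R_1$.

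The gap is in your last sentence. The global smallness $\int_{R_i}\delta g = O(\|\delta g\|^2)$ that you extract from the two constraints does \emph{not} control the triangle correction. Writing $f_1(z)=\int_0^{1/2}\delta g(x,z)\,dx$ and $f_2(z)=\int_{1/2}^1\delta g(x,z)\,dx$, the triple integral equals $3\lambda_2\int_0^1\big[\epsilon\big(f_1(z)^2+f_2(z)^2\big)+2(2e-\epsilon)f_1(z)f_2(z)\big]\,dz$, which depends on the $L^2$ norms of $f_1,f_2$, not on their integrals (and the integrals are all the constraints give you). An oscillatory $f_i$ with mean zero defeats any argument based on $\int_{R_i}\delta g$ alone.

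The paper's mechanism is instead pointwise in $z$: split off \emph{half} of the $I_0''$ term and bound it below, for each fixed $z$, by Cauchy--Schwarz in the form $\int_0^{1/2}\delta g(x,z)^2\,dx\ge 2f_1(z)^2$ and $\int_{1/2}^1\delta g(x,z)^2\,dx\ge 2f_2(z)^2$. Combined with the triangle term this yields, for each $z$, a $2\times 2$ quadratic form in $(f_1(z),f_2(z))$ with diagonal coefficients $c_1\sim I_0''(\epsilon)/2\sim 1/(4\epsilon)$, $c_2\sim I_0''(2e-\epsilon)/2$, and off-diagonal $c_3\sim |\ln\epsilon|/(2e)$. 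For small $\epsilon$ one has $c_1c_2>c_3^2$, so this form is positive semi-definite and the $z$-integral is nonnegative. The remaining half of $\iint I_0''(\delta g)^2$ then gives the stated lower bound $\tfrac{1}{2}\iint(\delta g)^2$ via $I_0''\ge 2$. The constraints play no role in the positivity of the second variation beyond eliminating the first-order terms; the missing idea in your sketch is precisely this reduction to the one-variable functions $f_1,f_2$.
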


\begin{proof} We first consider the first variation in $I(g)$ for general
graphons and derive
the Euler-Lagrange equations. It is easy to check that our specific 
$g$ satisfies these equations. We then consider the second variation in
$I(g)$. Note that the function $I_0$ satisfies
\begin{equation} I_0'(u) = \frac{1}{2} [\ln(u) - \ln(1-u)], \qquad I_0''(u) = \frac{1}{2}
\left [ \frac{1}{u} + \frac{1}{1-u}\right ]\ge 2. \end{equation} 

To find the Euler-Lagrange equations with the constraints that $(e(g), t(g))$ 
are equal to fixed values $(e_0,t_0)$, 
we use Lagrange multipliers and vary the function
$I(g) + \lambda_1 (e(g)-e_0) + \lambda_2 (t(g)-t_0)$. To first order, the
variation with respect to $g$ is 
\begin{eqnarray} \delta I(g) &=& \int_0^1\!\!\! \int_0^1 I_0'(g(x,y)) \delta g(x,y) \, dx \, dy
+ \lambda_1 \int_0^1\!\!\!\int_0^1 \delta g(x,y) \, dx \, dy \\
&& + 3 \lambda_2\int_0^1\!\!\!\int_0^1
h(x,y) \delta g(x,y) \, dx \, dy,
\end{eqnarray}
where we have introduced the auxiliary function
\begin{equation} h(x,y) = \int_0^1 \! g(x,z) g(y,z) \, dz. \end{equation} 
Setting $ \delta I(g)$ equal to zero, we get
\begin{equation} \label{ELeq} I_0'(g(x,y)) = - \lambda_1 - 3\lambda_2 h(x,y). \end{equation} 
Our particular $g(x,y)$ satisfies this equation with 
\begin{equation}\label{lambda2} 3 \lambda_2 = \frac{I_0'(2e-\epsilon)-I_0'(\epsilon)}{2(e-\epsilon)^2}. \end{equation} 

Next we expand $\delta t$ and $\delta I$ to second order in $\delta g$, ignoring $O((\delta g)^3)$ terms.
Since 
\begin{equation} 
\delta t = 3\iint h(x,y) \delta g(x,y) dx \, dy + 3 \iiint
g(x,y) \delta g(x,z) \delta g(y,z) dx \, dy \, dz + O((\delta g)^3), 
\end{equation}
and since
we are holding $e(g)$ and $t(g)$ fixed,
\begin{eqnarray} \delta I =&& 
\iint I_0'(g(x,y)) \delta g(x,y)  dx \, dy \cr &&+ \frac{1}{2}\iint
I_0''(g(x,y)) (\delta g(x,y))^2  dx\, dy \cr
=&& \iint (-\lambda_1 -3\lambda_2 h(x,y)) \delta g(x,y)  dx \, dy \cr && + 
\frac{1}{2}\iint
I_0''(g(x,y)) (\delta g(x,y))^2 dx \, dy \cr
=&& -\lambda_1 \delta e - \lambda_2 \delta t + 3 \lambda_2 \iiint
g(x,y) \delta g(x,z) \delta g(y,z) \,dx\, dy\, dz  \cr && + \frac{1}{2}\iint
I_0''(g(x,y)) (\delta g(x,y))^2 \, dx\, dy \cr 
=&& 3 \lambda_2
\iiint g(x,y) \delta g(x,z) \delta g(y,z) dx \,dy \,dz \cr &&+ \frac{1}{4}
\iint I_0''(g(x,y)) \delta g(x,y)^2 dx \,dy + \frac{1}{4}
\iint I_0''(g(x,y)) \delta g(x,y)^2 dx \,dy.
\end{eqnarray}
We have split the $\iint I'' \delta g^2$ term into two
pieces, as we will be applying different estimates to each piece. 

Since $h(x,y)$ and $I''(g)$ are
piecewise constant, all of our integrals 
break down into integrals over different quadrants. Let $R_1$ and $R_2$ be 
the following subsets of $[0,1]^2$:
\begin{equation} R_1 = \{x,y < 1/2\} \cup \{x,y > 1/2\}, \qquad 
R_2 = \{x<1/2<y\} \cup \{y < 1/2 < x\}.\end{equation} 
For each $z$, we define the functions $f_1(z) = \int_0^{1/2} \delta g(x,z) dx$
and $f_2(z) = \int_{1/2}^1 \delta g(x,z) dx$.  The second variation in $I$ is then
\begin{eqnarray} &&  \frac{1}{4} \iint \limits_{[0,1]^2} I_0''(g) \delta g(x,y)^2 dx\,dy   
+ \frac{I_0''(\epsilon)}{4}\iint \limits_{R_1} \delta g(x,y)^2
dx \, dy + \frac{I_0''(2e-\epsilon)}{4} \iint \limits_{R_2} \delta g(x,y)^2 dx \, dy \cr
&+&  3 \lambda_2  \int_0^1 dz \left [
\epsilon \iint \limits_{R_1}  \delta g(x,z) \delta g(y,z) dx \, dy
 +  (2e-\epsilon) \iint \limits_{R_2} \delta g(x,z) 
\delta g(y,z) dx \, dy \right ] \cr
&=& \frac{1}{4} \iint \limits_{[0,1]^2} I_0''(g(x,y)) \delta g(x,y)^2 dx \, dy 
+ \frac{I_0''(\epsilon)}{4}\iint \limits_{R_1} \delta g(x,z)^2
dx \, dz  + \frac{I_0''(2e-\epsilon)}{4} \iint\limits_{R_2} \delta g(x,z)^2 dx \, dz  \cr 
&& \qquad \qquad +  3 \lambda_2 \int_0^1 \epsilon \left [f_1(z)^2
+ f_2(z)^2)\right ] + 2(2e-\epsilon ) f_1(z) f_2(z)\,   dz 
\end{eqnarray}

Note that by Cauchy-Schwarz,
\begin{eqnarray}
\int_0^{1/2} (\delta g(x,z))^2 dx & \ge & 2 \left ( \int_0^{1/2}
\delta g(x,z) dx\right)^2 = 2 f_1(z)^2 \\
\int_{1/2}^1 (\delta g(x,z))^2 dx & \ge & 2 \left ( \int_{1/2}^1
\delta g(x,z) dx\right)^2 = 2 f_2(z)^2. 
\end{eqnarray}
Since $I_0''(\epsilon)$ and $I_0''(2e-\epsilon)$ are positive,
$\delta I$ is  bounded from below by 
\begin{eqnarray} &&  \frac{1}{4} \iint \limits_{[0,1]^2} I_0''(g(x,y)) \delta g(x,y)^2 dx \, dy 
+\frac{ I''(\epsilon)}{2}\left [\int_0^{1/2} f_1(z)^2 dz
+ \int_{1/2}^1 f_2(z)^2 dz \right ] \cr 
&+& \frac{I_0''(2e-\epsilon)}{2} \left [ \int_0^{1/2}f_2(z)^2 dz + 
\int_{1/2}^1 f_1(z)^2 dz \right ]  \cr 
&+& 3\lambda_2 \int_0^1 dz \left [\epsilon(f_1(z)^2 + f_2(z)^2) 
+ 2(2e-\epsilon)f_1(z)f_2(z) \right ] 
\end{eqnarray}

Collecting terms and applying equation (\ref{lambda2}), this bound becomes 
\begin{eqnarray} \frac{1}{4} \iint \limits_{[0,1]^2} I_0''(g(x,y)) \delta g(x,y)^2 dx \, dy &+&  \int_0^{1/2} \!\!dz [c_1 f_1(z)^2 + c_2 f_2(z)^2 + 2 c_3 f_1(z)f_2(z)] \cr 
&+& \int_{1/2}^1 \!\!dz [c_1 f_2(z)^2 + c_2 f_1(z)^2 + 2 c_3 f_1(z)f_2(z)], \end{eqnarray} 
where
\begin{eqnarray}
c_1 &=& \frac{I_0''(\epsilon)}{2} + \frac{\epsilon (I_0'(2e-\epsilon)-I_0'(\epsilon))}
{2(e-\epsilon)^2} \\
c_2 &=& \frac{I_0''(2e-\epsilon)}{2} + \frac{\epsilon (I_0'(2e-\epsilon)-I_0'(\epsilon))}
{2(e-\epsilon)^2} \\
c_3 &=& \frac{(2e-\epsilon)(I_0'(2e-\epsilon)-I_0'(\epsilon))}{2(e-\epsilon)^2}.
\end{eqnarray}
Note that all coefficients are positive, and that $c_2>1$. As $\epsilon \to 0$, $c_1$ goes to $+\infty$ as $1/\epsilon$, 
while $c_3$ only diverges as $-\ln(\epsilon)$. Since $c_1 c_2 > c_3^2$ for small
$\epsilon$, the integrand for each $z$ is positive semi-definite, so the 
integral over $z$ is non-negative, and we obtain
\begin{equation}\delta I \ge \frac{1}{4} \iint I_0''(g) \delta g^2 \ge \frac{1}{2} 
\iint \delta g(x,y)^2,\end{equation}
where we used the fact that $I_0''(u) \ge 2$ for all $u$. 
\end{proof}

Any global minimizer must
be $O(\epsilon)$ close to $g_0$, and hence $O(\epsilon)$ close to our
specified perturbative minimizer. This means that the only way for them
to differ is through a complicated bifurcation of minimizers
at $g_0$, despite the uniform bounds on $\delta I$ as we
approach the boundary. The difference between these hypothetical minimizers and $g_0$
would not be pointwise small, but would merely be small in an $L^1$ sense. 

For instance,
consider graphons of the form
\begin{equation}
g(x,y) = \begin{cases} p & x<c<y \hbox{ or } y<c<x; \cr \alpha & x,y < c; \cr \beta & x,y > c, \end{cases} 
\end{equation}
where $c$ is a parameter that we will vary and $p$, $\alpha$ and $\beta$ are constants that depend on $c$. For each $c$ sufficiently close to $1/2$, it is possible to find a graphon of this form such that $\iiint g(x,y) g(y,z) g(x,z) dx \, dy \, dz = t$ and $\iint g(x,y) dx\, dy =e$, and such that the 
Euler-Lagrange equations (\ref{ELeq}) are satisifed.  Call this graphon $g_c(x,y)$. A lengthly calculation shows that 
\begin{equation} \left . \frac{\partial^2 I(g_c)}{\partial c^2} \right |_{c=1/2} \ge 16e^2 \end{equation}
for small $t$, indicating that (nearly) bipartite graphs with pieces of unequal size have a higher rate function than $g$. 
This provides strong evidence 
that our perturbative solution is in fact a global minimizer for sufficiently small $t$.



\begin{cor}\label{cor1} Assuming our perturbative solution is the global optimizer,
there is a phase transition near the boundary point $(1/2,0)$ 
between the first and second scallop. \end{cor}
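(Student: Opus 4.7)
The plan is to exhibit two structurally distinct candidate minimizing graphons giving rise to two real-analytic expressions for $s_{e,t}$ on overlapping or adjacent regions near $(1/2,0)$, and to show that these expressions disagree at leading order across a curve emanating from $(1/2,0)$, so that $s_{e,t}$ cannot be real-analytic there.

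Under the assumption and Theorem~\ref{near}, for $e\le 1/2$ and small $t>0$ the entropy equals
\[
s_A(e,t) \;=\; -\tfrac12\bigl[I_0(2e-\epsilon)+I_0(\epsilon)\bigr],
\]
with $\epsilon=\epsilon(e,t)$ the analytic solution of $t=e^3-(e-\epsilon)^3$ normalized by $\epsilon(e,0)=0$. This is real-analytic provided $\epsilon$ and $2e-\epsilon$ lie in $(0,1)$; for $e>1/2$ the constraint $g_A\le 1$ forces $\epsilon\ge 2e-1$, equivalently $t\ge\tau(e):=e^3-(1-e)^3$. For $e>1/2$ and $t$ slightly above the second-scallop boundary, on the other hand, Theorem~\ref{scallop} with $\ell=2$ gives the qualitatively different minimizer $g_B$ (asymmetric bipartite with a nested bipartite piece); a perturbative analog of Theorem~\ref{near} around $g_B$, which I would carry out as the main technical step, should produce a second real-analytic branch $s_B(e,t)$ on a one-sided neighborhood of the second scallop.

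The decisive calculation compares these two branches on the joining curve $t=\tau(e)$. Setting $e=1/2+\delta$ one finds $c\approx 1/2-\delta/2$ and $p\approx 8\delta$ in Theorem~\ref{scallop}, and direct expansion of the two formulas at $t=\tau(e)\approx 3\delta/2$ yields
\[
s_A\;\approx\;-\tfrac{\delta}{2}\ln(2\delta)+\tfrac{\delta}{2},\qquad s_B\;\approx\;-\tfrac{\delta}{2}\ln(8\delta)+\tfrac{\delta}{2},
\]
so that $s_A-s_B\approx\delta\ln 2\neq 0$ to leading order in $\delta$. Since $g_A$ and $g_B$ are inequivalent reduced graphons yielding distinct analytic expressions separated by a leading-order amount, $s_{e,t}$ cannot coincide with a single real-analytic function across the curve $\tau(e)$ emanating from $(1/2,0)$; this will be the asserted phase transition between the first- and second-scallop regimes.

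The hardest step will be the perturbative analysis around $g_B$: the nested bipartite block structure of $g_B$ makes the second-variation bookkeeping substantially more involved than for $g_A$, and one must rule out further bifurcations of minimizers at the transition curve. Should a fully rigorous $s_B$ prove hard to establish, the corollary can still be obtained from the weaker observation that $s_A$ admits no real-analytic continuation past $\tau(e)$ (because $2e-\epsilon$ exits $[0,1]$) while by Theorem~\ref{thm1} the true $s_{e,t}$ remains finite and real-valued in a neighborhood below the curve; no single real-analytic function can agree with $s_{e,t}$ on an open set straddling $\tau$ near $(1/2,0)$.
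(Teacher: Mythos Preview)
Your fallback argument in the last paragraph is essentially the paper's proof, and you should lead with it rather than bury it. The paper simply observes that the explicit formula
\[
s_{e,t}=-\tfrac12\bigl[I_0(\epsilon)+I_0(2e-\epsilon)\bigr],\qquad t=e^3-(e-\epsilon)^3,
\]
cannot be continued analytically past the curve where $2e-\epsilon=1$ (equivalently $t=\tau(e)=e^3-(1-e)^3$), because $\partial^2 s/\partial e^2$ diverges there, while that curve lies in the interior of the phase space $R$. Since the true entropy is finite throughout $R$, analyticity must fail somewhere near $(1/2,0)$. That is the whole argument; no second branch $s_B$ is needed.

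Your main route has genuine gaps beyond being unnecessary. First, the perturbative analysis around the Theorem~\ref{scallop} minimizer $g_B$ is not in the paper and is substantial new work; you cannot invoke it as if it were available. Second, and more seriously, even granting an analytic $s_B$ near the second scallop, your comparison $s_A-s_B\approx\delta\ln 2$ does not by itself yield non-analyticity of $s_{e,t}$: the stated hypothesis is only that $s=s_A$ for $e<1/2$, and nothing in the corollary asserts $s=s_B$ on any open set for $e>1/2$. Two candidate analytic functions disagreeing is irrelevant unless you know the true entropy equals each of them on overlapping or adjacent open regions. Third, your numerical comparison evaluates $s_A$ on the curve $\tau(e)$ but $s_B$ on the scallop boundary; these happen to agree at order $\delta$ (both $t\approx 3\delta/2$) but differ at order $\delta^2$, so the ``same point'' comparison is only heuristic.

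In short: drop the two-branch comparison and promote your final paragraph to the proof. The divergence of a derivative of $s_A$ along $\tau(e)$, together with $\tau(e)$ lying strictly above the second scallop (hence interior to $R$) for $e$ slightly above $1/2$, already forces the loss of analyticity claimed.
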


\begin{proof} 
Our perturbative solution yields a formula for the
entropy: 

\begin{equation}\label{entropy} s_{e,t} = -\frac{1}{2} [I_0(\epsilon)
  + I_0(2e-\epsilon)].
\end{equation}

This formula for the entropy cannot be extended
  analytically beyond $e=(1+\epsilon)/2$, as $\partial^2 s/\partial
  e^2$ diverges as $e \to (1+\epsilon)/2$. However, $e=(1+
  \epsilon)/2$ corresponds to $t= (\epsilon^3 + 3\epsilon)/4$, or, using
  the more basic variable $e$,

\begin{equation} t= [(2e-1)^3+3(2e-1)]/4,
\end{equation}
which is in
  the interior of $(e,t)$ space. (Since the graphon $g(x,y)$ is nowhere
  zero, it differs in form from the graphons describing graphs
  with minimal $t$.)  Thus $s_{e,t}$ must fail to be analytic in some
  neighborhood of the first scallop.
\end{proof} 

Of course there must also be a phase transition, presumably from this
bipartite phase to a homogeneous phase, if one fixes 
$e$ and raises $t$, which we see as follows. 

\begin{cor}\label{cor2} Assuming our perturbative solution is the
  global optimizer, there is a phase transition as one raises $t$, for
  any fixed $0< e < 1/2$.\end{cor}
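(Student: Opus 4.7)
The plan is to adapt the argument of Corollary \ref{cor1} by swapping the roles of $e$ and $t$: fix $e \in (0, 1/2)$ and examine how the perturbative entropy formula (\ref{entropy}) depends on $t$.

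First, I invert $t = e^3 - (e-\epsilon)^3$ by setting $\eta = e - \epsilon$, so $\eta = (e^3-t)^{1/3}$ using the real cube root; this remains well-defined for $t > e^3$, where $\eta$ is simply negative. Substituting into (\ref{entropy}) yields
\begin{equation}
s_{e,t} = -\frac{1}{2}\bigl[I_0(e-\eta) + I_0(e+\eta)\bigr],
\end{equation}
which is manifestly an even function of $\eta$ and hence a function of $\eta^2 = (e^3-t)^{2/3}$. Taylor expanding about $\eta = 0$ cancels the odd-order terms and gives
\begin{equation}
s_{e,t} = -I_0(e) - \frac{1}{2} I_0''(e)(e^3-t)^{2/3} + O\bigl((e^3-t)^{4/3}\bigr).
\end{equation}

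Since $I_0''(e) = \frac{1}{2e(1-e)} > 0$, the coefficient of the fractional-power term is nonzero, so $\partial s_{e,t}/\partial t$ behaves like $\frac{1}{3}I_0''(e)(e^3-t)^{-1/3}$ and blows up as $t \to e^3$. Hence $s_{e,t}$ fails to be $C^1$---let alone analytic---at $t = e^3$. This value lies strictly in the interior of the admissible region $R$ for $0 < e < 1/2$, since $0 < e^3 < e^{3/2}$, so it represents a genuine interior phase transition. If instead the perturbative solution ceases to be the global optimizer at some smaller $t^\ast$, then there is already a phase transition at $t^\ast$ where the actual entropy must switch away from the perturbative formula.

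The main obstacle is purely conceptual: one must verify that the perturbative ansatz of Theorem \ref{near} extends continuously across $t = e^3$. This is immediate, since as $t$ increases through $e^3$ the parameter $\epsilon$ simply passes through the value $e$ (where the graphon is the constant $g \equiv e$) and continues into the ``anti-bipartite'' regime $\epsilon > e$, remaining in $[0, 2e]$ throughout. The symmetry $\epsilon \leftrightarrow 2e-\epsilon$, equivalently $\eta \leftrightarrow -\eta$, leaves the perturbative entropy invariant, and this even parity in $\eta$ is precisely what forces the fractional $(e^3-t)^{2/3}$ singularity rather than an analytic dependence on $t$.
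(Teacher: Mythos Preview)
Your proof is correct but follows a genuinely different route from the paper's.  The paper argues that as $\epsilon$ increases to $2e$ the graphon value $2e-\epsilon$ hits $0$, where $I_0$ is not analytic, so the perturbative entropy formula (\ref{entropy}) cannot be continued past $t=2e^3$; since $2e^3<e^2<e^{3/2}$ lies in the interior of $R$, a transition must occur at or before that point.  You instead locate a singularity already at $t=e^3$, the Erd\H{o}s--R\'enyi curve, coming from the cube-root behavior of the inversion $t\mapsto\epsilon$ at its critical point $\epsilon=e$ (where $dt/d\epsilon=3(e-\epsilon)^2$ vanishes), combined with the even parity of the entropy in $\eta=e-\epsilon$.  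Your argument needs the global-optimizer assumption only up to $t=e^3$ rather than $t=2e^3$, and it pinpoints the transition at the physically natural point where the bipartite and anti-bipartite regimes of the ansatz meet at the constant graphon $g\equiv e$; the paper's argument is more elementary in that it simply observes where the ansatz runs out of room in $[0,1]$.  Both mechanisms are valid under the stated hypothesis and both place the singularity strictly inside $R$.
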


\begin{proof}

Recall from Corollary \ref{cor1} the connection
between $t$ and $\epsilon$:

\begin{equation} 
t= \epsilon^3 - (e-\epsilon)^3.
\end{equation} 
Note that $t$ is an increasing function of $\epsilon$ and reaches the
value $2e^3$ when $\epsilon = 2e$. From equation (\ref{entropy}) for
the entropy we see that it cannot be extended analytically to $t >
2e^3$, yet for $e< 1/2$ we have $2e^3< e^2< e^{3/2}$ so $(e,t)$ is in
the interior of the phase space (Figure 1). 
\end{proof}

\section{Conclusion}

Our goal was to analyze possible phase transitions between
multipartite phases of complex networks, analogous to solid/solid
transitions in materials. To this end we adapted the Strauss model
\cite{St}, defined in the grand canonical ensemble, to a
microcanonical ensemble. It is appropriate at this point to review the
`equivalence of ensembles' in statistical physics.

In thermodynamics the concavity of the entropy $S(E,N,V)$, as a
function of internal energy $E$, particle number (or mass) $N$ and
volume $V$, and the interpretation of equilibrium states as states
maximizing the entropy, are both fundamental; see for instance
\cite{Ca, Ma}. Lagrange multipliers can be used to convert this
optimization criterion of the entropy to an equivalent optimization
criterion of the free energy, where the free energy is the Legendre
transform of the entropy \cite{Do, Ca, Ma}. It is important that the
Legendre transform between the entropy and free energy be invertible
so the two optimization schemes are equivalent, and this follows from
the concavity of the entropy. 
(See Section 26 of \cite{Ro} for the mathematics of the
Legendre tranform between convex functions.)

Statistical mechanics supplies a model for thermodynamic states, as
probability distributions on mechanical multiparticle states. From a
given short range particle interaction one can then (in principle)
compute the internal energy $E$ and entropy $S$, and prove the above
two features of the entropy: its concavity and its optimization role
for equilibrium states. To do this one uses the basic Boltzmann/Gibbs
ansatz: that the entropy $S(E,N,V)$ is proportional to
\begin{equation}
-\sum_j\rho_j\ln(\rho_j)
\end{equation} 
where $\rho_j$ is the probability of multiparticle state $j$, and the
equilibrium state is that probability distribution $\{\rho_j\}$ on the
set $\Xi(E,N,V)$, of multiparticle states of energy $E$ and particle number $N$
in volume $V$, which maximizes the entropy \cite{R1, R2,
  Ge, Wi, Ma}. 
(Note that in taking an infinite volume limit, which
we must do to obtain equivalence of ensembles, one can divide the
entropy's variables by volume, and consider the entropy density as a
function of particle and energy density).

The equivalence of ensembles in statistical mechanics is basically a
strenghening of the equivalence in thermodynamics between entropy
$s_{e,t}$ and free energy $\psi_{\beta_1,\beta_2}$, corresponding to
Lagrange multipliers $\beta_1$ and $\beta_2$, which follows from the
concavity of the entropy. With the modeling of the thermodynamic
states this now implies a bijection $(e,t)\longleftrightarrow
(\beta_1,\beta_2)$ such that $s_{e,t}$ and $\psi_{\beta_1,\beta_2}$
have the same optimizing states, at least off some manageable sets of
parameter pairs corresponding to `phase coexistence' where the
bijection can degenerate to a many-to-one map \cite{Ge}.

In exponential random graph models, which are mean field rather than
short range, the entropy need not be concave \cite{TET} and indeed
this fails in an obvious way for the specific model we are analyzing,
the Strauss model, since even the domain $R$ of the entropy is not
convex (see Figure 1). Therefore in the infinite node limit of the
model the free energy density $\psi_{\beta_1,\beta_2}$,  need not be
equivalent to the entropy density $s_{e,t}$; $\psi_{\beta_1,\beta_2}$
can be obtained from $s_{e,t}$ by Legendre transform, but it may not
be possible to recover $s_{e,t}$ from $\psi_{\beta_1,\beta_2}$.
Inequivalence can result from the existence of graphons maximizing
$s_{e,t}$ for some $(e,t)$ which are not maximizers of
$\psi_{\beta_1,\beta_2}$ for any $(\beta_1,\beta_2)$.  Specific
instances of such loss of information in $\psi_{\beta_1,\beta_2}$ are
shown in a future paper \cite{RS}, but one consequence can already be
seen in the transitions, studied previously \cite{PN, CD, RY} in the
grand canonical ensemble, between independent-edge graphs across a
phase transition curve in the phase space; see Figure 2 for the
Strauss model. Such `free particle' graphs, with only edge density $e$
as a variable, optimize $\psi_{\beta_1,\beta_2}$ for
$(\beta_1,\beta_2)$ throughout the upper half of the grand canonical
phase space, so $e$ is a function of $(\beta_1,\beta_2)$ off the
transition curve there; see \cite{RY} for details. These graphs all
lie on the curve $t=e^3$ in Figure 1, not a 2-dimensional region in
that microcanonical phase space, making it difficult to use
singularities of the free energy $\psi_{\beta_1,\beta_2}$ to imply
singularities of the entropy $s_{e,t}$.  For this reason we have
focused here on phase transitions in the lower region of the
microcanonical phase space, Figure 1.

\begin{figure}[h]
\vskip.4truein
\includegraphics[width=3in]{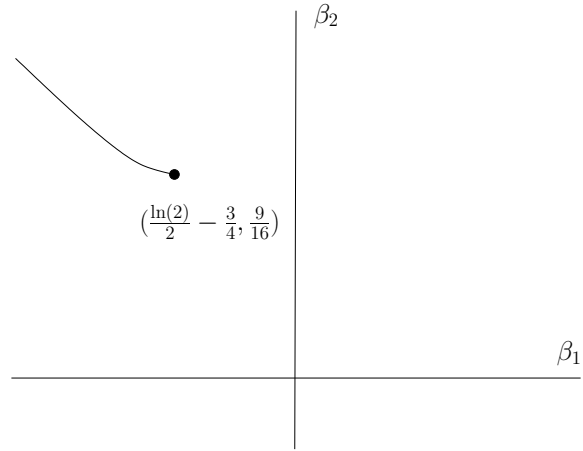}
\caption{The curve of all singularities of $\psi_{\beta_1,\beta_2}$
  for $\beta_2>-1/2$}
\label{trans}
\end{figure}

We have shown that our (\hbox{2-parameter}, bipartite) graphons $g$ of
Theorem \ref{near} maximize the entropy density at least to second
order in perturbation theory, among graphons with a limited range of
edge and triangle densities. Assuming the $g$ are actually global
maximizers we then proved the entropy density would have to lose
analyticity as the edge density of the graphon approaches the
tripartite regime. We also show that the entropy density must suffer a
phase transition as the triangle density is raised sufficiently high,
presumably from the structured bipartite phase to a homogeneous phase
of higher triangle density.

We expect that a more complicated analysis could produce appropriate
graphons $g^{(k)},\ k\ge 1$, near each of the higher edge density
(multipartite) graphons of minimial triangle density, with a transition
near each scallop intersection.  Intuitively this suggests a mechanism
whereby as edge density is increased, near the minimum triangle density
graphon, the system progressively transitions through finer and finer
structure; for high edge density most graphs would consist of
many interacting `parts'.

Our results on phase transitions require that the graphons of Theorem
\ref{near} be in fact global, not just local, maximizers of the
entropy density.  In a future paper \cite{RS} we use a symmetry to prove that
these graphons are indeed the unique global maximizers at least for
triangle density in the range $0\le t\le 1/8$ and edge density
$e=1/2$, and we can then see a transition on this curve. However we
still cannot prove the graphons are the global optimizers of entropy
density for $(e,t)$
in any two-dimensional region, as is needed to fully justify the notion
of a structured phase. (See \cite{AR} for a variant of this approach.)

In conclusion we emphasize that our key tool was Theorem \ref{thm1},
an optimization formula for the asymptotic entropy density, and made
essential use of
the graph limit formalism.  The graphon formalism is a powerful tool
for dealing with the infinite size limit in mean field models, and
we have used it to make some progress on understanding the
structure of asymptotically large graphs near the extreme of low
triangle density.

\medskip

\noindent {\bf Acknowledgements:} We gratefully acknowledge useful 
discussions with Francesco Maggi and Peter Winkler.

\end{document}